\title{A General Approach to Proving Properties of Fibonacci Representations via Automata Theory}
\author{
Jeffrey Shallit\footnote{Research funded by a grant from NSERC, 2018-04118.}
\ and Sonja Linghui Shan
\institute{School of Computer Science,
University of Waterloo,
Waterloo, ON  N2L 3G1,
Canada}
\email{shallit@uwaterloo.ca,
slshan@uwaterloo.ca}
}
\definecolor{webgreen}{rgb}{0,.5,0}
\definecolor{webbrown}{rgb}{.6,0,0}
\newcommand{\seqnum}[1]{\href{https://oeis.org/#1}{\rm \underline{#1}}}
\def\Enn{\mathbb{N}}
\def\Zee{\mathbb{Z}}
\DeclareMathOperator{\equal}{equal}
\begin{document}
\maketitle

\theoremstyle{plain}
\newtheorem{theorem}{Theorem}
\newtheorem{corollary}[theorem]{Corollary}
\newtheorem{lemma}[theorem]{Lemma}
\newtheorem{proposition}[theorem]{Proposition}

\theoremstyle{definition}
\newtheorem{definition}[theorem]{Definition}
\newtheorem{example}[theorem]{Example}
\newtheorem{conjecture}[theorem]{Conjecture}

\theoremstyle{remark}
\newtheorem{remark}[theorem]{Remark}

\maketitle

\begin{abstract}
We provide a method, based on automata theory, to mechanically prove 
the correctness of many numeration systems based on Fibonacci numbers.   With it, long
case-based and induction-based proofs of correctness
can be
replaced by simply constructing a regular expression (or finite automaton) specifying the rules for valid representations, followed by a short computation.
Examples of the systems that can be handled using our technique include Brown's lazy representation (1965), 
the far-difference representation developed by Alpert (2009), and three representations proposed by Hajnal (2023).
We also provide three additional systems and prove their validity.
\end{abstract}

\section{Introduction}
Given an increasing sequence $(s_n)_{n \geq 0}$ of positive integers,
a numeration system is a way of expressing natural numbers as a linear
combination of the $s_n$.   
Many different numeration systems, 
such as representation in base $k$, or 
the more exotic systems based on the Fibonacci numbers,
have been proposed.  For example,
recall that the Fibonacci numbers, sequence \seqnum{A000045} in the
{\it On-Line Encyclopedia of Integer Sequences} (OEIS),
are defined by the recurrence
$F_n = F_{n-1} + F_{n-2}$ for $n \geq 2$ and the initial values
$F_0 = 0$, $F_1 = 1$.
Consider writing a non-negative integer $n$ as a sum of distinct Fibonacci numbers $F_i$ for $i \geq 2$.  
Some numbers, such as $12$, have only one such representation
$(12 = 8 + 3 + 1 = F_6 + F_4 + F_2)$,
while others have many:
$8 = F_6 = F_5 + F_4 = F_5 + F_3 + F_2$.

There are two very desirable characteristics of a numeration system. 
First, {\it completeness:}  every natural number should have a representation.  Second, {\it unambiguity:}  no natural number should have two or more different representations.   These two goals are typically achieved by restricting the types of representations that are considered valid within the system.  If a system achieves both goals, we say it is {\it perfect}. For Fibonacci representations, various perfect systems have been proposed.

Among all possible perfect systems based on Fibonacci
numbers, one is particularly useful:
the {\it Zeckendorf\/} or {\it greedy\/} representation.
This representation can be computed as follows:   first, choose the largest index $i$ such that $F_i \leq n$.
Then the representation for $n$ is $F_i$ plus the (recursively-computed) representation for $n - F_i$.   The representation for $0$ is the empty sum of $0$ Fibonacci numbers. A simple induction now shows that the greedy algorithm produces a   representation for every natural number, which is evidently unique.

This representation was originally noted by Zeckendorf, but was first published by Lekkerkerker
\cite{Lekkerkerker:1952} and only later by Zeckendorf himself \cite{Zeckendorf:1972}.   It was also anticipated, in much more general form, by
Ostrowski \cite{Ostrowski:1922}.

An alternative (but equivalent) definition of Zeckendorf 
representation is to impose a condition
that valid representations must obey.  For example, we could require that a representation be valid if and only if no two consecutive Fibonacci numbers appear in the sum.  

It is convenient to express arbitrary sums of distinct Fibonacci numbers as strings of digits over a finite alphabet (in analogy with base-$k$ representation).
Let $x = a_1 \cdots a_t$ be a string (or word) made up of integer digits.  We define
its value as a Fibonacci representation as follows:
\begin{equation}
    [x]_F := \sum_{1 \leq i \leq t} a_i F_{t+2-i}.  
\label{eq:fibsum}
\end{equation}
Note that these strings are in ``most-significant-digit'' first format.
For example, $[2101]_F = 2F_5 + F_4 + F_2 = 14$.

It is also useful to define a (partial) inverse
to $[x]_F$.  
By $(n)_F$ we mean the binary string $x$ such that $x$ is the Zeckendorf representation of $n$; alternatively,
such that $[x]_F = n$ and $x$ contains no occurrence of the block $11$.   In what follows, we adopt
this string-based point of view almost exclusively.   We can think of the condition ``no occurrence of the block $11$''
as a {\it rule}, specifying which representations are valid, adopted precisely to guarantee both completeness and unambiguity.

In formal language theory, a language $L$ is a (finite or infinite) collection of strings.   A rule is then
encoded by the language or set of strings that obey the rule.   Completeness
then becomes the assertion that for all $n$ there exists a string $x \in L$ such that $[x]_F = n$, while  
unambiguity becomes the assertion that there do not exist distinct strings $x, y \in L$ such
that $[x]_F = [y]_F$. \footnote{We adopt the convention that two strings are considered to be the same if they differ only in the number of 
leading zeros.   Thus, for example, $[100]_F = [0100]_F = 3$ are
the same representation.} 

Let us look at another example involving the Fibonacci numbers, one that is 
much less well known: the so-called {\it lazy\/} representation \cite{Brown:1965}.
In this system, representation as a sum of Fibonacci numbers corresponds (via Eq.~\eqref{eq:fibsum}) to a binary string having no occurrence of the
block $00$ (where leading zeros are not even considered).   Once again, this rule provides a numeration system that is both complete
and unambiguous \cite{Brown:1965}.  Table~\ref{tab1}
gives both greedy (Zeckendorf) and lazy representations for the first few natural numbers.
\begin{table}[htb]
\begin{center}
\begin{tabular}{c|cccccccccccc}
$n$ & 0 & 1 & 2 & 3 & 4 & 5 & 6 & 7 & 8 & 9 & 10 & 11 \\
\hline
greedy & $\epsilon$ & 1 & 10 & 100 & 101 & 1000 & 1001 & 1010 & 10000 & 10001 & 10010 & 10100 \\
lazy & $\epsilon$ & 1 & 10 & 11 & 101 & 110 & 111 & 1010 & 1011 & 1101 & 1110 & 1111
\end{tabular}
\end{center}
\caption{Greedy and lazy Fibonacci representations.}
\label{tab1}
\end{table}

The greedy and lazy representations are certainly not the only possible perfect numeration systems based on the Fibonacci numbers.   In fact, there are {\it uncountably many} such systems!   These result from making a choice, for all $n$ having at least two different representations as sums of distinct Fibonacci numbers, about which particular representation is chosen to be valid.  (By a result of Robbins \cite{Robbins:1996}, ``most'' numbers have more than one representation as a sum of distinct Fibonacci numbers.)   

If we demand that the set of valid representations forms a regular language---that is, accepted by a finite automaton; see Section~\ref{automata}---there are still infinitely many different systems (although only countably many).  For example, consider choosing the $t$'th largest possible representation for $n$ in lexicographic order (if there are at least $t$), and otherwise the
lexicographically first.  It will follow from results below that, for each $t \geq 0$, this choice gives a regular language $L_t$
of valid representations.

Some natural questions then arise:    given a language $L$ encoding the ``rule'' a representation must obey (such as no occurrence of the block $11$, or no occurrence of the block $00$), how can we determine if the corresponding set of Fibonacci representations is complete and unambiguous?  And if it is complete, how can we efficiently find
a representation for a given number $n$?    Up to now, each new system proposed required a new proof, often a rather tedious case-based proof by induction.  In this paper we provide a {\it general framework\/} for answering these questions ``automatically", via an algorithm, in the case where the language of valid
representations is regular.

These ideas are capable of generalization.  For example, we can also consider representations for all integers $\Zee$, instead of just the natural numbers $\Enn$.
This can be achieved in two distinct ways:
\begin{itemize}
\itemsep-0.2em
\item By allowing a larger digit set, say,  $\{ -1, 0, 1 \}$;

\item By using the so-called negaFibonacci system, based on the Fibonacci numbers of negative index $F_{-n}$ for $n \geq 1$.  
\end{itemize}
Once again, we would like a choice of valid representations that is complete and unambiguous.  

In this paper we show how to decide these properties, provided that the set of valid representations forms a regular language (which is indeed the case for all the proposed systems in the literature).

Here is an outline of the paper.  In Section~\ref{automata}, we explain the basics of automata theory needed to understand the rest of the paper.  In Section~\ref{rep01}, we discuss how to test completeness and ambiguity for systems using digits $0$ and $1$ only.  
In Section~\ref{repneg} we discuss systems
using digits $-1, 0, 1$ only.  In Section~\ref{allint} we discuss representations for all integers, not just the natural numbers.  In Section~\ref{dicto}
we discuss an entirely new type of Fibonacci representation based on dictionary order.   Finally, in Section~\ref{exper} we describe a few of the new Fibonacci representations we found through exhaustive search of small automata.

\section{The decision procedure and {\tt Walnut}}
\label{automata}

We assume the reader is familiar with the basics of automata theory
as discussed, for example, in
\cite{Hopcroft&Ullman:1979}.

The following particular case of a theorem of B\"uchi \cite{Buchi:1960} 
(as later corrected by Bruy\`ere et al. \cite{Bruyere&Hansel&Michaux&Villemaire:1994} is our principal tool in the paper.
\begin{theorem}
There is a decision procedure that, given a first-order logical formula $F$ involving natural numbers, comparisons, automata, and addition, and no free variables, will decide
the truth or falsity of $F$.   Furthermore, if $F$
has free variables, the procedure constructs a DFA
accepting those values of the free variables (in Fibonacci representation) that make $F$ evaluate to {\tt TRUE}.
\label{one}
\end{theorem}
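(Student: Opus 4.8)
The plan is to proceed by structural induction on the first-order formula $F$, maintaining throughout the invariant that to every subformula $\varphi(x_1,\ldots,x_k)$ with free variables $x_1,\ldots,x_k$ one can effectively associate a DFA $M_\varphi$ over the alphabet of $k$-tuples of Fibonacci digits that accepts exactly those tuples of representations (read in parallel, most-significant-digit first, and padded with leading zeros to a common length) satisfying $\varphi$. Once this invariant holds for every subformula, a sentence with no free variables yields a machine over a trivial alphabet whose language is either empty or not, and testing (non)emptiness by a reachability computation decides the truth of $F$; when free variables remain, $M_F$ is precisely the promised DFA.

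For the base cases I would construct automata for the atomic predicates directly. Equality $x = y$ and the order relations $x < y$ and $x \leq y$ are straightforward to recognize: reading two padded Zeckendorf strings in parallel, a small automaton can track whether the strings have agreed so far, or determine the lexicographic comparison, which, since both inputs obey the no-$11$ rule and share a common length, coincides with numerical comparison. Membership $x \in L(M)$ is immediate, since the recognizing automaton is supplied as part of the formula. The one nontrivial atomic predicate is addition, $x + y = z$.

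The hard part, and the crux of the entire theorem, is showing that the graph $\{(x,y,z) : x + y = z\}$ is recognizable by a finite automaton when the arguments are written in Zeckendorf form. This is exactly where the special structure of the Fibonacci system enters: the dominant root $\varphi = (1+\sqrt5)/2$ of $x^2 = x + 1$ is a Pisot number, since its algebraic conjugate has modulus less than $1$. The key fact, due to Frougny and to Bruy\`ere--Hansel, is that in any Pisot numeration system the carries produced while adding digit-strings and renormalizing to canonical form remain bounded, so that the add-and-normalize process can be simulated by a finite-state transducer; intersecting its behaviour appropriately yields the desired DFA. Establishing this bounded-carry property is the single genuinely number-theoretic step, and everything else is generic automata theory.

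The inductive cases then follow from standard closure properties, taking care only to keep representations canonical and properly padded. Negation becomes DFA complementation, after first completing the automaton; conjunction and disjunction become the product construction (intersection and union) over the common tuple alphabet, after aligning the free-variable coordinates. An existential quantifier $\exists x_i\,\varphi$ is handled by projecting away the $x_i$-coordinate of the alphabet, which produces an NFA, followed by the subset construction to redeterminize; here one must also intersect with the automaton recognizing valid Zeckendorf strings and absorb the leading-zero convention, so that projection introduces no spurious witnesses. A universal quantifier is treated by the equivalence $\forall x_i\,\varphi \equiv \neg\,\exists x_i\,\neg\varphi$. Assembling these constructions along the parse tree of $F$ completes the induction and, with the emptiness test at the root, yields the decision procedure.
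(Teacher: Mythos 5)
The paper does not prove this theorem at all: it is quoted as a known result of B\"uchi, as corrected by Bruy\`ere, Hansel, Michaux, and Villemaire, with the Fibonacci-specific details delegated to Mousavi--Schaeffer--Shallit. Your sketch correctly reconstructs exactly that standard argument---structural induction using closure under Boolean operations and projection, with the recognizability of Zeckendorf addition via Frougny's bounded-carry normalization theorem for Pisot numeration systems rightly identified as the one genuinely number-theoretic step---so it is in substance the same proof as in the sources the paper cites, and I see no gap beyond the usual leading-zero bookkeeping you already flag.
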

For more information about the specific case
of the decision procedure for Fibonacci representation, see \cite{Mousavi&Schaeffer&Shallit:2016}.

We should explain how automata can process pairs, triples, and generally $k$-tuples of inputs.    This is done by replacing the input alphabet $\Sigma$ with
the alphabet $\overbrace{\, \Sigma \times \Sigma \times \cdots\times \Sigma\,}^{k \rm\ times}$.  In other words, inputs are $k$-tuples of alphabet symbols.  The $i$'th input then corresponds to the concatenation of the $i$'th components of all the $k$-tuples.   Of course, this means that all $k$ inputs have to have the same length; this is achieved by padding shorter inputs, if necessary, with leading zeros. 

The decision procedure of Theorem~\ref{one} has been implemented in free software called {\tt Walnut},
originally created by Hamoon Mousavi \cite{Mousavi:2016}; also see the book
\cite{Shallit:2022}.
We recall some of the basics of {\tt Walnut} syntax:
\begin{itemize}
\itemsep-0.2em
    \item {\tt eval} evaluates a formula with no free variables and returns {\tt TRUE} or {\tt FALSE};
    {\tt def} defines an automaton for future use; {\tt reg} defines a regular expression.
    \item In a regular expression, the period
    is an abbreviation for the entire alphabet.
    \item {\tt \&} is logical {\tt AND},
    {\tt |} is logical {\tt OR}, 
    {\tt =>} is logical implication,
    {\tt <=>} is logical {\tt IFF},
    {\tt \char'176} denotes logical {\tt NOT}.
    \item {\tt A} denotes $\forall$ (for all);
    {\tt E} denotes $\exists$ (there exists).
    \item {\tt ?msd\_fib} tells {\tt Walnut} to evaluate an arithmetic expression using
    Fibonacci representation.
\end{itemize}

We use {\tt Walnut} to do the computations needed to verify that a given system is complete
and unambiguous.
For much more about {\tt Walnut}, including a link to download it, visit\\
\centerline{\url{https://cs.uwaterloo.ca/~shallit/walnut.html} \ . }

\section{Representation of natural numbers using digits 0 and 1 only}
\label{rep01}

In this section we consider
representations of the natural
numbers by Fibonacci numbers using digits
$0$ and $1$ only.

The first step is to find an automaton that can convert from an arbitrary Fibonacci representation
to the greedy or Zeckendorf representation.   
To do this we use the following simple observation:

\begin{proposition}
We can convert a binary string $x$ to a Zeckendorf representation $y$
for the same number using the following algorithm:
first append a 0 on the front, if necessary.  Then scan the string from left to right, replacing each
occurrence of ``\,$011$" successively with ``\,$100$".
\end{proposition}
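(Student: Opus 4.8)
The plan is to establish three things: that the rewriting rule $011 \to 100$ preserves the value $[\,\cdot\,]_F$, that the rewriting process terminates, and that the string it terminates on contains no block $11$ and hence is the Zeckendorf representation of the original number. I would begin with value preservation, which is immediate from the Fibonacci recurrence. Suppose the block $011$ occurs at positions $j, j+1, j+2$ of a length-$L$ string. By Eq.~\eqref{eq:fibsum} these positions carry the weights $F_{L+2-j}, F_{L+1-j}, F_{L-j}$, so replacing $011$ by $100$ changes the contribution of this window from $F_{L+1-j} + F_{L-j}$ to $F_{L+2-j}$. Since $j+2 \le L$ we have $L+2-j \ge 2$, so the recurrence $F_{L+2-j} = F_{L+1-j} + F_{L-j}$ shows the two contributions are equal; thus each rewrite leaves $[x]_F$ unchanged. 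Prepending a leading $0$ adds $0 \cdot F_{L+1}$ and likewise does not change the value, so we may assume throughout that the string begins with $0$.

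Next I would treat termination. The block $011$ contains two $1$'s whereas $100$ contains only one, so each application of the rule strictly decreases the total number of $1$'s in the string; since this count is a nonnegative integer, only finitely many rewrites are possible and the process halts at some string $y$ in which no block $011$ occurs. I would also note that the rule replaces three symbols by three symbols, so the length $L$ is an invariant of the process, as is the value $n := [x]_F$.

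The main obstacle is to pass from ``no block $011$'' to ``no block $11$,'' and here I would combine a structural observation with a value bound. First, a binary string having no block $011$ but containing a block $11$ must in fact begin with $11$: taking a maximal run of $1$'s of length at least $2$, if that run did not begin at the first position, then the $0$ immediately preceding it together with the run's first two $1$'s would form a forbidden block $011$. It therefore suffices to rule out that $y$ begins with $11$. For this I use that immediately after the leading $0$ is in place the string has first digit $0$, whence $n = [x]_F = \sum_{2 \le i \le L} b_i F_{L+2-i} \le \sum_{2 \le j \le L} F_j = F_{L+2} - 2 < F_{L+2}$. Since each rewrite preserves both $n$ and $L$, this inequality $n < F_{L+2}$ persists to the terminal string $y$. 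But if $y$ began with $11$, its first two digits alone would contribute $F_{L+1} + F_L = F_{L+2}$, forcing $[y]_F \ge F_{L+2}$ and contradicting the bound. Hence $y$ begins with neither $011$ nor $11$, so it contains no block $11$ at all; as $[y]_F = [x]_F$ and the condition ``no block $11$'' characterizes the (unique) Zeckendorf representation, $y$ is exactly the Zeckendorf representation of $[x]_F$, as claimed.
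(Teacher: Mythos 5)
Your proof is correct, and while your first two steps (value preservation via the recurrence $F_{L+2-j}=F_{L+1-j}+F_{L-j}$, and termination because each rewrite removes one $1$) coincide with the paper's, your handling of the crucial final step is genuinely different. The paper argues about the rewriting \emph{process} itself: each replacement of $011$ by $100$ introduces two consecutive $0$'s, of which only the second can later become a $1$ under the left-to-right scan, so no block $11$ can survive. You instead argue about the \emph{terminal string} directly, via two invariants of the rewrite system --- the value $n=[x]_F$ and the length $L$. The prepended $0$ gives $n\le F_{L+2}-2<F_{L+2}$ (using the same telescoping identity $\sum_{0\le j\le m}F_j=F_{m+2}-1$ that the paper later invokes in Lemma~\ref{lem: diffby1}), and your structural observation that a $011$-free string containing $11$ must begin with $11$ reduces everything to this bound, since a leading $11$ alone contributes $F_{L+1}+F_L=F_{L+2}$. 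What this buys you is order-independence: your argument shows that \emph{any} maximal sequence of $011\to 100$ rewrites, applied in any order, ends at the Zeckendorf representation, whereas the paper's terser argument leans on the discipline of the left-to-right scan (and is somewhat delicate there, since a rewrite can create a new occurrence of $011$ to its left). Your version also makes explicit why a single prepended $0$ suffices --- it is exactly what establishes $n<F_{L+2}$ --- a point the paper leaves implicit. One tiny remark: from $j+2\le L$ you in fact get $L+2-j\ge 4$, stronger than the $\ge 2$ you state, though $\ge 2$ is all the recurrence requires.
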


\begin{proof}
Clearly each such replacement does not change the value of $[x]_F$.   The algorithm terminates
because each replacement lowers the total number of $1$'s by $1$.   Finally, the algorithm
clearly cannot result in two consecutive $1$'s, because it introduces two consecutive $0$'s, only
the second of which can later change to a $1$.
\end{proof}

We can implement this idea as a DFA  $C$ that takes two inputs in parallel, $x$ and $y$, and accepts
if and only if both $[x]_F = [y]_F$ and $y$ is a valid Zeckendorf representation; that is, it contains no two consecutive $1$'s.    It suffices
to keep track of $[x']_F - [y']_F$ for the prefix $x'$ of $x$ seen so far, and similarly
for the prefix $y'$ of $y$ seen so far.   Note that we assume that $x$ and $y$ have the
same length, with the shorter of the two prefixed by leading zeros, if necessary.   We can think of this as a ``converter'' or ``normalizer''
that allows us to turn arbitrary Fibonacci representations into Zeckendorf representations.  It is depicted
in Figure~\ref{fig: bfsc transition diagram}.
\begin{figure}[h]
\centering
    \includegraphics[scale=.28]{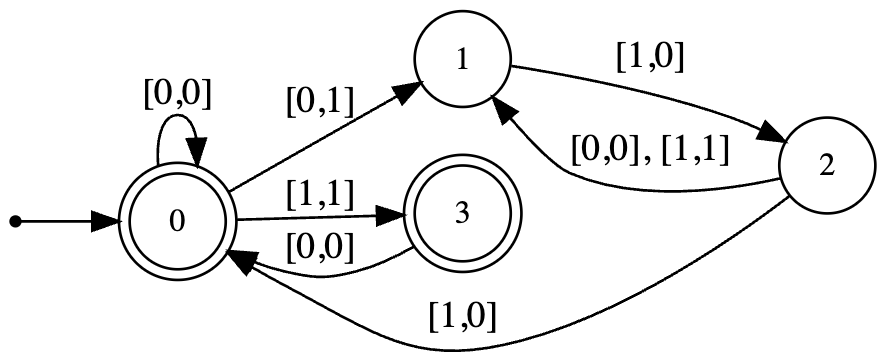}
    \caption{DFA $C$ for conversion to the Zeckendorf representation.}
    \label{fig: bfsc transition diagram}
\end{figure}
This automaton was given by 
Berstel \cite{Berstel:2001} in a slightly different form.  Also see \cite{Shallit:2021c}.

As an example, consider the input
$[0,1][1,0][1,0][1,1][0,0]$ to $C$, whose
first components spell out $x=01110$ and
whose second components spell out $y=10010$.
Starting in state $0$, the automaton visits,
successively, states $1,2,0,3,0$, and hence
accepts---as it should, since $[x]_F = [y]_F$.  

We now state one of our main results.
\begin{theorem}
There is an algorithm that,
    given rules that specify which representations are valid (in the form of a regular language $L$ of all valid representations), will decide if the corresponding numeration system based on the Fibonacci numbers is complete and unambiguous for $\Enn$.
\end{theorem}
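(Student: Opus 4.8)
The plan is to express both completeness and unambiguity as first-order statements in the logical theory decided by Theorem~\ref{one}, so that deciding them reduces to constructing finite automata and testing emptiness or universality. Writing $A_L$ for a finite automaton recognizing $L$, I want to decide the two sentences
\[
\Phi_{\mathrm c} := \forall n\, \exists x\, \bigl(x \in L \,\wedge\, [x]_F = n\bigr),
\qquad
\Phi_{\mathrm u} := \forall x\, \forall y\, \bigl((x \in L \wedge y \in L \wedge [x]_F = [y]_F) \Rightarrow x = y\bigr),
\]
where $x,y$ range over binary strings, $n$ over $\Enn$, and $x = y$ denotes equality of representations (that is, up to leading zeros). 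The whole difficulty is that $L$ consists of arbitrary binary strings, which need not be Zeckendorf representations, whereas Theorem~\ref{one} and {\tt Walnut} speak natively about numbers through their Zeckendorf representations; so I first need to import the value map $[\cdot]_F$ into the logic.

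The bridge is exactly the converter automaton $C$ of the preceding discussion. First I would build, by the same bounded-difference idea used for $C$, a two-track automaton $\equal$ that on input $(x,y)$ accepts precisely when $[x]_F=[y]_F$ for arbitrary binary strings $x,y$: one only needs to track the quantity $[x']_F-[y']_F$ over the common prefix read so far, and a short argument using the growth of the Fibonacci numbers shows this difference can take only finitely many values on any accepting computation, so the automaton is finite. (Equivalently, $\equal$ is the projection onto the first two tracks of the triples $(x,y,z)$ with $C(x,z)\wedge C(y,z)$, which is regular.)

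With these automata in hand, completeness is decided as follows. Form the product of $C$ with the automaton checking $x \in L$ on the first track, obtaining an automaton over pairs $(x,y)$ that accepts when $x \in L$ and $y$ is the Zeckendorf representation of $[x]_F$; then project onto the second track to obtain a regular set $Z_L$ consisting of exactly the Zeckendorf representations of those $n$ representable in the system. Since $Z_L$ is always contained in the set of Zeckendorf strings, $\Phi_{\mathrm c}$ holds iff $Z_L$ equals the (regular) set of all binary strings with no block $11$; this is a language-equality test between two regular languages, hence decidable. For unambiguity, intersect $\equal$ with the requirements $x \in L$, $y \in L$, and with the relation $x \neq y$; after padding both tracks to a common length with leading zeros, distinctness of representations becomes literal string inequality, so this relation is regular. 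Then $\Phi_{\mathrm u}$ holds iff the resulting language is empty, again decidable.

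The main obstacle---really the only nonroutine step---is the construction and correctness of the value-equality relation as a finite automaton: one must prove that the running difference $[x']_F - [y']_F$ stays within a bounded window (so the machine is finite), and one must get the leading-zero and padding conventions exactly right, since these govern when two strings count as the same representation. Once $C$ and $\equal$ are available, everything else is standard closure of regular languages under product, projection, and complement, together with the decidability of emptiness and universality---equivalently, a direct appeal to Theorem~\ref{one} with $\Phi_{\mathrm c}$ and $\Phi_{\mathrm u}$ written using $C$ (or $\equal$) as the automata permitted in the formula.
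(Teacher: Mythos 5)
Your proposal is correct and takes essentially the same route as the paper: both express completeness and unambiguity as the first-order sentences $\forall n\,\exists x\,(x\in L \wedge [x]_F=n)$ and the negation of $\exists x,y\in L\,(\neg\equal(x,y)\wedge [x]_F=[y]_F)$, with the converter automaton $C$ supplying the value map, and then invoke the decidability of Theorem~\ref{one}. The only difference is that you explicitly unpack the decision procedure into product, projection, and emptiness/universality tests (and sketch why the bounded-difference construction of the value-equality automaton is finite), details the paper leaves inside the black box of Theorem~\ref{one}.
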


\begin{proof}
Using Theorem~\ref{one}, it suffices
to express the properties of completeness and unambiguity as a first-order logic formula $F$.   Once this is done, the decision algorithm can determine if $F$ is true or false.

Completeness says every integer has a representation in $L$.  We can express this as follows:
\begin{equation}
\forall n \ \exists x\ x \in L \ \wedge \ [x]_F = n,
\label{log1}
\end{equation}

Unambiguity says that no integer has two distinct representations in $L$.  We can express this as follows:
\begin{equation}
    \neg\exists x, y \in L \ (\neg \equal(x,y)) \ \wedge \ 
[x]_F = [y]_F.
\label{log2}
\end{equation}
Here $\equal$ means that $x$ and $y$ are the same, up to leading zeros.
\end{proof}

Furthermore, if $L$ is a regular language that provides a system that is complete, we can find a representation in $L$ for $n$ efficiently.
The first step is to represent $n$ in Fibonacci representation, say using the greedy algorithm.  Construct a new automaton from {\tt fcanon} by using two intersections.   The first intersection is with an automaton with a first component that belongs to $L$, while
the second component is arbitrary.   The second intersection is with an automaton where the first component is arbitrary, and the second is of the form 
$0^*(n)_F$.  This gives a new automaton
of $O(\log n)$ states, and it now suffices to find any accepting path (a path from the initial state
to the final state).  This can be done in linear time in the number of states using depth-first or
breadth-first search.   This gives us an $O(\log n)$ algorithm to find a representation.    Thus we have proved:
\begin{theorem}
Suppose $L$ is a regular language.   If $L$ is complete, we can find a representation for an integer $n$
in $O(\log n)$ time.
\end{theorem}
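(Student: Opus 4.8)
The plan is to follow the construction already sketched before the statement: reduce the search for a valid representation of $n$ to finding an accepting path in a small product automaton. First I would compute the Zeckendorf representation $(n)_F$ using the greedy algorithm. Since the Fibonacci numbers grow like $\varphi^k$, the string $(n)_F$ has length $O(\log n)$, and the greedy algorithm produces it after $O(\log n)$ subtractions. This canonical string encodes the ``target'' value $n$.

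Next I would assemble a product automaton $P$ from three components, all reading pairs of digits in parallel. The first is the converter $C$ of Figure~\ref{fig: bfsc transition diagram}, which accepts a pair $(x,y)$ exactly when $[x]_F = [y]_F$ and $y$ is a Zeckendorf representation; $C$ has a fixed, constant number of states. The second is an automaton that checks $x \in L$ on the first component and ignores the second; since $L$ is regular, this too has a constant number of states, independent of $n$. The third is an automaton $A_n$ that checks the second component has the form $0^*(n)_F$ and ignores the first. It is this last factor that carries the dependence on $n$: because $A_n$ must match the specific string $(n)_F$ of length $O(\log n)$ after an arbitrary block of leading zeros, it is essentially a linear chain with a self-looping prefix state, and so has $O(\log n)$ states. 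Forming the product of the three therefore yields an automaton $P$ with $O(\log n)$ states.

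By construction, any pair $(x,y)$ accepted by $P$ satisfies $x \in L$, $[x]_F = [y]_F$, and $[y]_F = n$, so that $x$ is a representation of $n$ lying in $L$; reading off the first components along an accepting path gives the desired $x$. It remains to argue that such a path exists and can be found quickly. Existence follows from completeness of $L$: there is some $x \in L$ with $[x]_F = n$, and pairing it (with appropriate leading-zero padding so that the two inputs have equal length) against $(n)_F$ yields an accepted pair, hence a nonempty language and an accepting path in $P$. To find one I would run breadth-first (or depth-first) search from the initial state; this runs in time linear in the number of states and transitions of $P$, i.e.\ $O(\log n)$, and returns a simple path, so the extracted $x$ itself has length $O(\log n)$.

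The main obstacle is not conceptual but lies in the careful bookkeeping around padding and length alignment, which underlies both the state bound and the nonemptiness argument. Because $C$ and $A_n$ require their two inputs to have equal length, one must ensure a valid $x$ can always be aligned with $0^*(n)_F$; this is exactly why $A_n$ is built to accept $0^*(n)_F$ rather than $(n)_F$ alone, and why the leading-zero prefix may induce a cycle in $P$ that the search must tolerate (breadth-first search handles this automatically by returning a shortest path). Verifying that $A_n$ truly has only $O(\log n)$ states---equivalently, that $(n)_F$ has length $O(\log n)$---is the one place where a quantitative estimate on Fibonacci growth is genuinely needed; everything else is the routine product construction, justified by the converter automaton and by treating arithmetic on $O(\log n)$-bit quantities as unit cost.
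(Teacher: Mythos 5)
Your proposal is correct and follows essentially the same route as the paper: intersect the converter {\tt fcanon} with an automaton checking $x \in L$ on the first component and one checking $0^*(n)_F$ on the second, observe the product has $O(\log n)$ states, and extract an accepting path by breadth-first or depth-first search. Your additional bookkeeping about padding, the nonemptiness argument from completeness, and the $O(\log n)$ bound on $|(n)_F|$ merely makes explicit what the paper leaves implicit.
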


\begin{remark}
Here we use the convention of the so-called ``word RAM" model, where we assume that $n$ fits
in a single machine word, or more generally that we can
perform basic operations on integers with $O(\log n)$ bits in unit time.
\end{remark}

All this can be carried out mechanically with {\tt Walnut}.  Here all we have to do is define the language $L$ of valid representations (say, with a regular expression) and type in the {\tt Walnut} commands corresponding to the two logical assertions \eqref{log1} and \eqref{log2}.
We illustrate this with two examples.

The first is the lazy representation mentioned previously, and discussed first by Brown \cite{Brown:1965}.  The first step is to give a
regular expression defining a valid representation
in Brown's system:
\begin{verbatim}
reg lazyExclude {0,1} "0*1(0|1)*00(0|1)*":
def lazy "~$lazyExclude(s)":
\end{verbatim}
This gives a $4$-state automaton testing the
lazy criterion that is depicted in Figure~\ref{lazyr}.
\begin{figure}[htb]
\centering
    \includegraphics[width=4.0in]{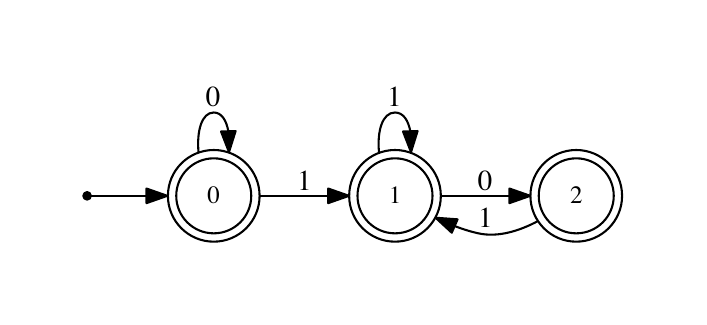}
    \caption{DFA for Brown's lazy representation.}
    \label{lazyr}
\end{figure}

We test the completeness and unambiguity for Brown's system as follows. 
\begin{verbatim}
reg equal {0,1} {0,1} "([0,0]|[1,1])*":
eval brown1 "?msd_fib An Es $fcanon(s,n) & $lazy(s)":
eval brown2 "?msd_fib ~En,s,t $lazy(s) & $lazy(t) & (~$equal(s,t)) 
    & $fcanon(s,n) & $fcanon(t,n) ":
\end{verbatim}
Both return {\tt TRUE}.
Given these results, we have now proven that the lazy representation is complete and unambiguous. 

For a second example, see the Appendix.

\section{Representation using digits $-1$, $0$, and $1$}
\label{repneg}

We now turn to representations using digits $-1$, $0$, and $1$ in the Fibonacci system.   

Recently, Hajnal \cite{Hajnal:2023} described three Fibonacci representations 
using Eq.~\eqref{eq:fibsum} to associate 
a string $x = e_t e_{t-1} \cdots e_2 \in \{ -1,0,1\}^*$ 
with a natural number $n$:  {\it alternating, even, and odd.}   Using
induction and a case-based argument, he proved that each of these three representations is
complete and unambiguous.   

Using automata, we can replace his rather long arguments with our general approach.
We first describe each of his systems, and show that
the set of valid representations for all natural numbers is a regular language.
 
The alternating representation requires 
a representation to fulfill four conditions:
\begin{enumerate}
\itemsep-0.2em 
    \item 
    the most significant nonzero term is positive, 
    \item
    two adjacent nonzero terms cannot be of the same sign, 
    \item 
    two adjacent nonzero terms have at least one zero in between, and
    \item
    if there are two or more nonzero terms, then 
    there has to be at least two zeros 
    between the last and the second last nonzero terms.
\end{enumerate}
We denote a number $n$ in this representation as $ [n]_{A} $.
For example, $ [9]_{A} = 10\Bar{1}001 $, where $ \Bar{1} $ is used for $-1$.

For the alternating representation,
we can use the following {\tt Walnut} code:
\begin{verbatim}
reg altInclude1 {-1,0,1} "(0*|0*1.*)":
reg altExclude1 {-1,0,1} ".*(10*1|[-1]0*[-1]).*":
reg altExclude2 {-1,0,1} ".*(1[-1]|[-1]1).*":
reg altInclude2 {-1,0,1} "(0*|0*10*|.*(100+[-1]|[-1]00+1)0*)":
def alt "~$altExclude1(s) & ~$altExclude2(s) & $altInclude1(s) & $altInclude2(s)":
\end{verbatim}
The result is an automaton of 12 states that checks whether an input over the alphabet $\{-1,0,1\}$ is alternating, and is illustrated
in Figure~\ref{altfig}.
\begin{figure}[htb]
\centering
    \includegraphics[width=5in]{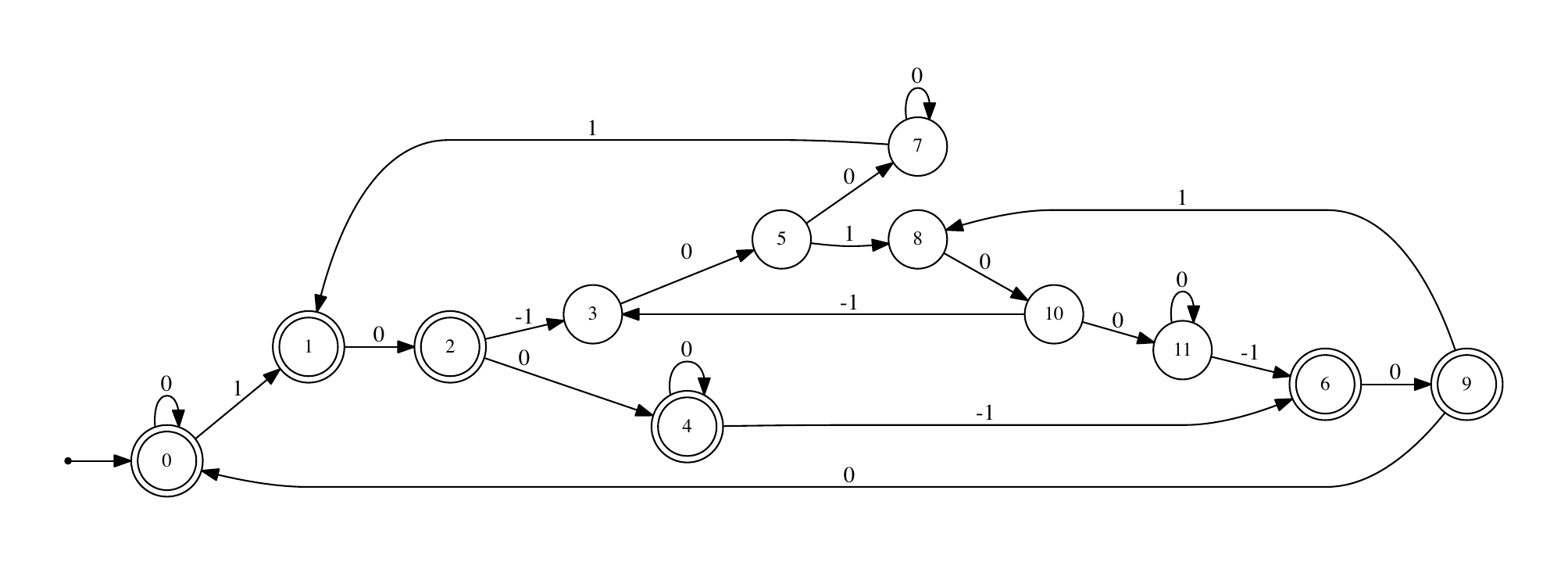}
    \caption{DFA for the alternating condition.}
    \label{altfig}
\end{figure}

The even representation requires three conditions:
\begin{enumerate}
\itemsep-0.2em
    \item 
    the most significant nonzero term is positive, 
    \item
    only positions indexed with even numbers, such as $e_2$, can have nonzero terms, and 
    \item 
    two adjacent nonzero terms cannot both be $-1$.
\end{enumerate}
We denote a number $n$ in this representation as $ [n]_{E} $.
For example, $ [14]_{E} = 10\Bar{1}0001 $.
\begin{verbatim}
reg evenInclude {-1,0,1} "(0*|0*1(0[-1]|01|00)*)":
reg evenExclude {-1,0,1} ".*[-1]0*[-1].*":
def even "$evenInclude(s) & ~$evenExclude(s)":
\end{verbatim}
This gives us a 5-state automaton to check the even condition, which is
illustrated in Figure~\ref{evenfig}.
\begin{figure}[htb]
\centering
    \includegraphics[width=5in]{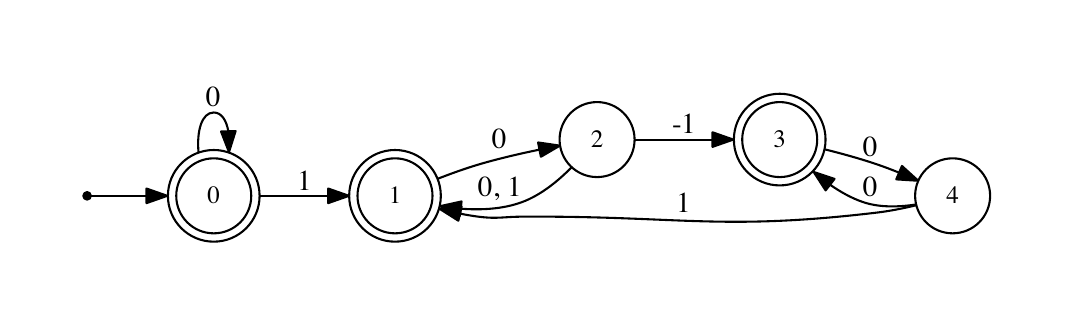}
    \caption{DFA for the even condition.}
    \label{evenfig}
\end{figure}

The odd representation adds an epsilon term to the sum in Eq.~\eqref{eq:fibsum},
therefore associating 
a string $ e_t e_{t-1} \cdots e_2 \epsilon $,
where $ \epsilon \in \{-1,0\} $, with a number $n$.
The odd representation requires the string to meet three conditions:
\begin{enumerate}
\itemsep-0.2em
    \item 
    the most significant nonzero term is positive, 
    \item
    only positions indexed with odd numbers (such as $e_3$) and the epsilon term are allowed to be nonzero, and 
    \item 
    two adjacent nonzero terms cannot both be $-1$.
\end{enumerate}
We denote a number $n$ in this representation as $ [n]_{O} $.
For example, $ [14]_{O} = 100010\Bar{1} $,
where $ \Bar{1} $ is used for $ \epsilon = -1 $.

We express the odd representation conditions in \texttt{Walnut} as follows. 
Notice we relax the third condition (required in \cite{Hajnal:2023}) slightly by limiting its application to only the string $ e_t e_{t-1} \cdots e_2 $ without the $\epsilon$ term.
\begin{verbatim}
reg oddInclude {-1,0,1} "(0*|0*10([-1]0|10|00)*)":
reg oddExclude {-1,0,1} ".*[-1]0*[-1].*":
def odd "$oddInclude(s) & ~$oddExclude(s)":
\end{verbatim}

This gives us a 5-state automaton to check the odd condition, which is
illustrated in Figure~\ref{oddfig}.
\begin{figure}[htb]
\centering
    \includegraphics[width=5in]{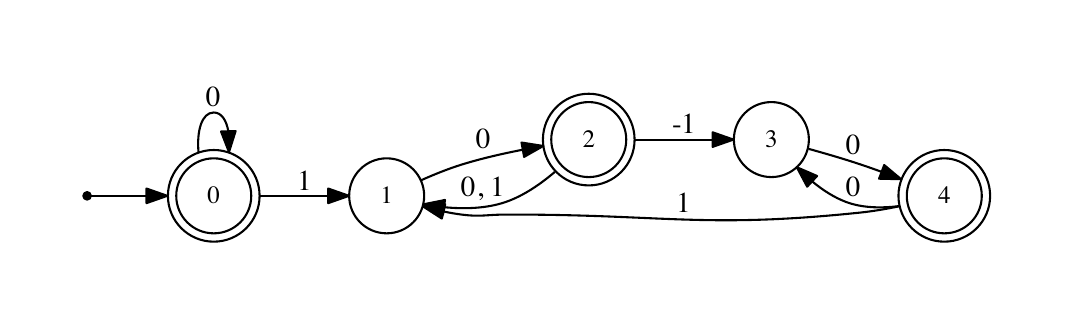}
    \caption{DFA for the odd condition.}
    \label{oddfig}
\end{figure}

It now remains to use our technique to show that these representations are all complete
and unambiguous.   In order to do this, we need a ``converter'' automaton that can
compare representations using digits $-1,0,1$ to ordinary Zeckendorf representation.
We can construct such an automaton based on {\tt fcanon} as follows.   The idea is to use one automaton to ``select'' the positive digits of a representation, another one to ``select'' the negative digits, and then do an (implicit) subtraction
to obtain the value of the representation.
\begin{verbatim}
reg posdigits {-1,0,1} {0,1} "([1,1]|[-1,0]|[0,0])*":
reg negdigits {-1,0,1} {0,1} "([-1,1]|[1,0]|[0,0])*":
def fcanon2 "?msd_fib Et,u,w,s $negdigits(x,t) & $posdigits(x,u) &
   $fcanon(t,w) & $fcanon(u,s) & z+w=s":
\end{verbatim}
This gives a 24-state automaton
{\tt fcanon2}, the analogue of {\tt fcanon}, for doing the conversion.

Let us now check that the alternating representation of Hajnal is both complete and unambiguous.
\begin{verbatim}
reg same {-1,0,1} {-1,0,1} "([-1,-1]|[0,0]|[1,1])*":
eval altRep1 "?msd_fib An Es $fcanon2(s,n) & $alt(s)":
# evaluates to TRUE, 4 ms
eval altRep2 "?msd_fib ~En,s,t $alt(s) & $alt(t) & (~$same(s,t))
    & $fcanon2(s,n) & $fcanon2(t,n)":
# evaluates to TRUE, 31 ms
\end{verbatim}

Similarly, we can check the even and odd representations, as follows:
\begin{verbatim}
eval evenRep1 "?msd_fib An Es $fcanon2(s,n) & $even(s)":
# evaluates to TRUE, 1 ms
eval evenRep2 "?msd_fib ~En,s,t $even(s) & $even(t) & (~$same(s,t))
    & $fcanon2(s,n) & $fcanon2(t,n)":
# evaluates to TRUE, 4 ms
eval oddRep1 "?msd_fib An (Es $fcanon2(s,n) & $odd(s)) | 
   (Et $fcanon2(t,n+1) & $odd(t))":
# evaluates to TRUE, 7 ms
eval oddRep2 "~En,s,t $odd(s) & $odd(t) & (~$same(s,t))
    & $fcanon2(s,n) & $fcanon2(t,n)":
# evaluates to TRUE, 4 ms
\end{verbatim}
This completes our proof that all three systems of Hajnal are complete and unambiguous.

\begin{remark}
We noticed, by testing the following, that this representation is also complete 
if $ \epsilon \in \{1,0\} $ instead of $ \epsilon \in \{-1,0\} $ as required in \cite{Hajnal:2023}.
\begin{verbatim}
eval oddRep3 "?msd_fib An 
    (Es $fcanon2(s,n) & $odd(s)) | (Et $fcanon2(t,n-1) & $odd(t))":
# evaluates to TRUE, 4 ms
\end{verbatim}
\end{remark}

\section{Representations for all integers} 
\label{allint}

In this section we investigate two different
ways to represent {\it all\/} integers (not just
the natural numbers) using Fibonacci representations.

Alpert \cite{Alpert:2009} described a {\it far-difference representation} for Fibonacci numbers
that writes {\it every integer\/} (not just
the natural numbers), with a Fibonacci
numeration system using the digits
$-1,0,1$.     
In Alpert's system, the far-difference representation requires the string to have
\begin{enumerate}
\itemsep-0.2em
    \item 
    at least three zeros between any two nonzero terms of the same sign, and
    \item 
    at least two zeros between any two nonzero terms of different signs.
\end{enumerate}
We use $ [n]_A $ to denote a natural number in this representation:
for example, $ [-38]_A = \Bar{1}000\Bar{1}001 $. 
One nice feature of Alpert's system is that it
is very easy to negate an integer:  all we have to
do is change the sign of each digit.\footnote{The three systems proposed by Hajnal also exhibit this property. Therefore, if we exclude the condition stating "the most significant nonzero term is positive" from the three systems, they can be perfect representations for all integers.}

We express the far-difference representation conditions in \texttt{Walnut} as follows. 
\begin{verbatim}
reg exclude1 {-1, 0, 1} ".*([-1][-1]|[-1]0[-1]|[-1]00[-1]|11|101|1001).*":
reg exclude2 {-1, 0, 1} ".*([-1]1|1[-1]|10[-1]|[-1]01).*":
def alpert "~$exclude1(s) & ~$exclude2(s)":
\end{verbatim}
This gives a $7$-state automaton that checks
the Alpert condition, as illustrated in
Figure~\ref{alp}.
\begin{figure}[h]
\centering
    \includegraphics[width=5in]{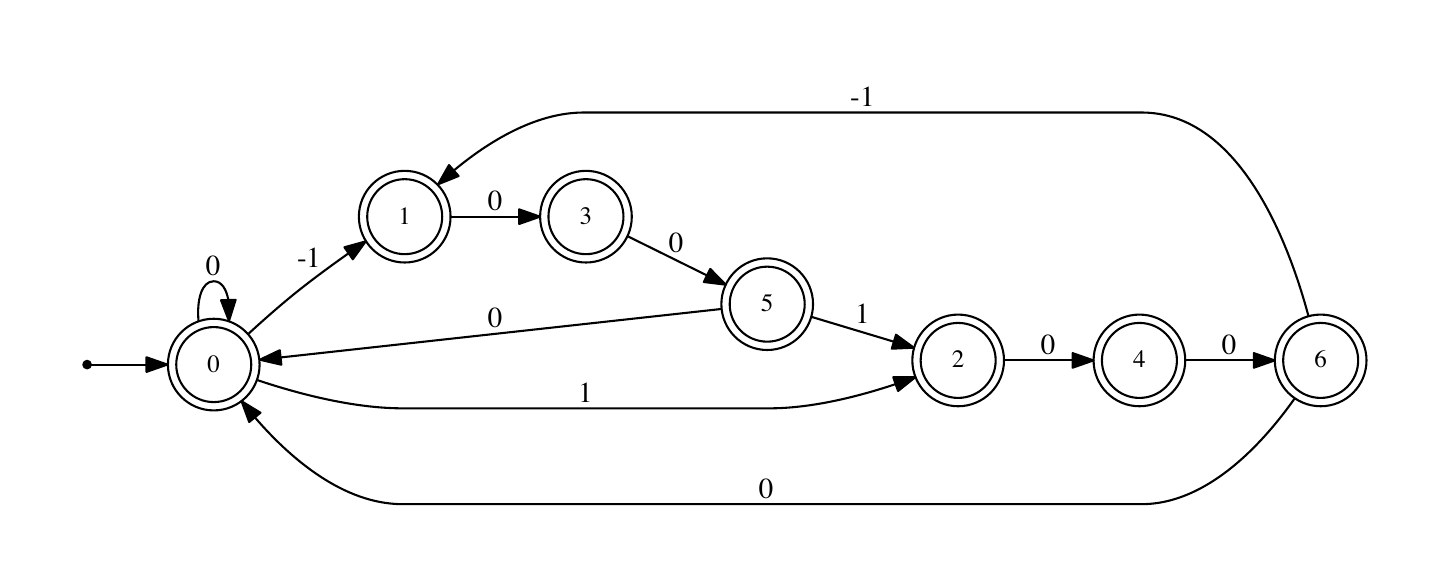}
    \caption{DFA for the Alpert conditions.}
    \label{alp}
\end{figure}

To check completeness and ambiguity, we have to check
positive and negative integers separately.
In addition to {\tt fcanon2}, we need an
automaton {\tt fcanon2\_neg} that takes a
string $x$ over the alphabet $\{-1,0,1\}$
and a natural number $n \geq 0$ as input
and accepts if $[x]_F = -n$.
\begin{verbatim}
def fcanon2_neg "?msd_fib Et,u,w,s $negdigits(x,t) & $posdigits(x,u) &
   $fcanon(t,w) & $fcanon(u,s) & z+s=w":
\end{verbatim}

We can then prove the completeness and unambiguity of this system as follows. 
\begin{verbatim}
eval farDiff1_pos "?msd_fib An Es $fcanon2(s,n) & $alpert(s)":
eval farDiff1_neg "?msd_fib An Es $fcanon2_neg(s,n) & $alpert(s)":
# both evaluate to TRUE, 3 ms
eval farDiff2_pos "?msd_fib ~En,s,t $alpert(s) & $alpert(t)
    & (~$same(s,t)) & $fcanon2(s,n) & $fcanon2(t,n)":
eval farDiff2_neg "?msd_fib ~En,s,t $alpert(s) & $alpert(t)
    & (~$same(s,t)) & $fcanon2_neg(s,n) & $fcanon2_neg(t,n)":
# both evaluate to TRUE, 9 ms
\end{verbatim}
Thus we have easily verified the correctness
of Alpert's conditions.   

Bunder \cite{Bunder:1992} invented a different numeration system for all integers, called the negaFibonacci system.  In this system, we write integers as
a sum of distinct Fibonacci numbers with negative
indices, subject to the condition that no two consecutive Fibonacci numbers can be used.   Since $F_{-n} = (-1)^{n+1} F_n$
for $n \geq 1$, this is the same as enforcing
the requirement in a
Fibonacci representation 
$a_t F_t + \cdots + a_2 F_2 + a_1F_1$ with
digits $a_i \in \{-1,0, 1\}$, (a) only the
terms with odd indices are allowed to be
positive and only the terms with even indices are
allowed to be negative and (b) no two consecutive
nonzero digits can appear.
We can enforce this condition as follows:
\begin{verbatim}
reg bunder1 {-1,0,1} ".*1.(..)*":
reg bunder2 {-1,0,1} ".*[-1](..)*":
reg bunder3 {-1,0,1} ".*((1[-1])|([-1]1)).*":
def bunder "~$bunder1(x) & ~$bunder2(x) & ~$bunder3(x)":
\end{verbatim}
which gives the automaton in 
Figure~\ref{bund}.
\begin{figure}[h]
\centering
    \includegraphics[width=5in]{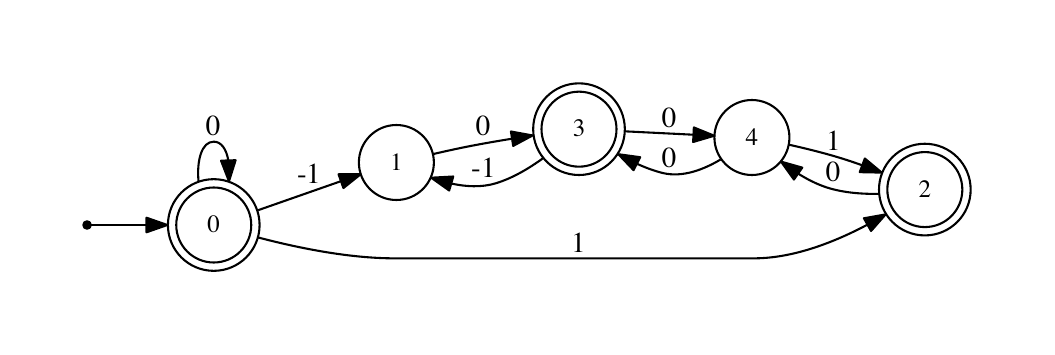}
    \caption{DFA for the Bunder conditions.}
    \label{bund}
\end{figure}
We can then check completeness and unambiguity much as we did for Alpert's system, but there is a new
wrinkle:  representations have an extra digit
at the end, corresponding to the term $a_1 F_1$,
that must be taken care of.  To do this
we introduce a ``shifter" automaton that shifts a representation to the right, and a ``lastbit"
that determines if the last bit of a 
representation is $1$ or $0$.
The shifter is called {\tt rshiftfib} and is
displayed in Figure~\ref{rshift}.
\begin{figure}[htb]
\centering
    \includegraphics[width=4in]{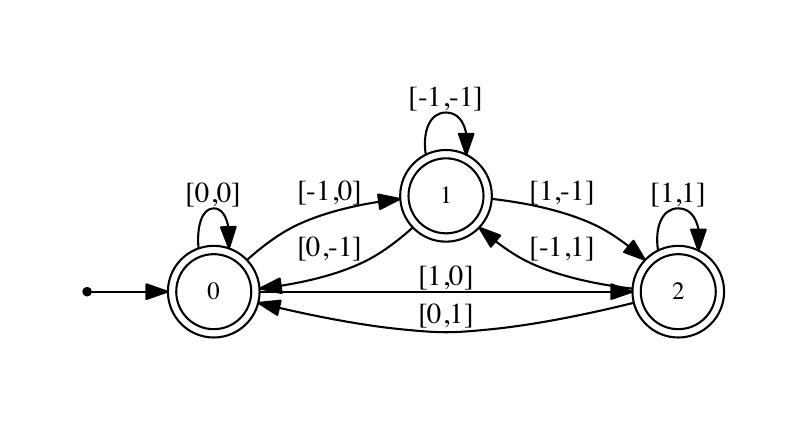}
    \caption{Shifter automaton.}
    \label{rshift}
\end{figure}

Then Bunder's representation can be verified to be complete and unambiguous, as follows:
\begin{verbatim}
reg lastbit {-1,0,1} {0,1} "([0,0]|[1,0]|[-1,0])*([1,1]|[0,0])":
def fcanon3 "?msd_fib Et,u,m $rshiftfib(x,t) &
   $lastbit(x,u) & $fcanon2(t,m) & z=m+u":
def fcanon3_neg "?msd_fib Et,u,m $rshiftfib(x,t) &
   $lastbit(x,u) & $fcanon2_neg(t,m) & z=m-u":
eval bunder1_pos "?msd_fib An Es $fcanon3(s,n) & $bunder(s)":
eval bunder1_neg "?msd_fib An Es $fcanon3_neg(s,n) & $bunder(s)":
# both evaluate to TRUE, 1 ms
eval bunder2_pos "?msd_fib ~En,s,t $bunder(s) & $bunder(t)
    & (~$same(s,t)) & $fcanon3(s,n) & $fcanon3(t,n)":
eval bunder2_neg "?msd_fib ~En,s,t $bunder(s) & $bunder(t)
    & (~$same(s,t)) & $fcanon3_neg(s,n) & $fcanon3_neg(t,n)":
# both evaluate to TRUE, 12 ms
\end{verbatim}

\section{Maximum dictionary order representation}
\label{dicto}

In this section we consider an entirely new Fibonacci representation based on dictionary order. 
We first introduce how strings are compared in dictionary order.
Let $s = s_1s_2 \cdots s_m$ and $t = t_1t_2 \cdots t_n$ where $m \leq n$ be two strings. 
Let $i$ such that $1 \leq i \leq m$ be the first position where $s_i \neq t_i$.
If $s_i < t_i$, then $s < t$ in dictionary order; otherwise $s > t$.
For example, $1\underline{0}11 < 1\underline{1}00$, but $10\underline{1}1 > 10\underline{0}1$.
If there is no such position $i$, then either $s = t$ or $s$ is a proper prefix of $t$.
In this latter case we say $s<t$.
For example, $110 = 110$ and $110 < 1100$.

Consider a representation of natural numbers by always choosing the {\it largest string representation in dictionary order\/} for every number.   
Since every number has a Fibonacci-based representation, 
the representation is complete. 
Since we choose only one Fibonacci-based representation for each number, the representation is unambiguous.   Representations of the first few numbers are given in Table~\ref{dictable}.
\begin{table}[htb]
\begin{center}
    \begin{tabular}{c|ccccccccccc}
    $n$ & 1 & 2 & 3 & 4 & 5 & 6 & 7 & 8 & 9 & 10 & 11 \\
    \hline
    $(n)_D$ & 1 & 10 & 11 & 101 & 110 & 111 & 1010 & 1100 & 1101 & 1110 & 1111 
    \end{tabular}
\end{center}
\caption{Representations for the first few numbers.}
\label{dictable}
\end{table}

We now show that
\begin{theorem}
The set of largest Fibonacci representations in dictionary order forms a regular language.
\end{theorem}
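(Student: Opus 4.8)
The plan is to exhibit a first-order formula, over the Fibonacci-automatic structure (the natural numbers with addition together with the value predicate $[\cdot]_F$), that defines the set $D$ of dictionary-maximal representations; regularity then follows immediately from Theorem~\ref{one}, which guarantees that the set of satisfying assignments of any such formula is accepted by a DFA. We already have, from the converter automaton $C$ (equivalently \texttt{fcanon}), a regular relation expressing value-equality $[x]_F=[y]_F$. The only nonstandard ingredient is the dictionary-order relation itself, so the whole argument reduces to showing that dictionary order is an effectively regular relation on the pairs of strings we actually need to compare.

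The main obstacle is a clash of conventions. Dictionary order aligns two strings at their \emph{most-significant} ends and compares left to right, with the end of a string treated as a symbol smaller than $0$ and $1$; by contrast, the automaton model (and the computation of $[x]_F$) aligns strings at their \emph{least-significant} ends, padding shorter inputs with leading zeros. When two strings have different lengths these alignments disagree --- for instance $11>_D 100$ although the left-padded comparison of $011$ and $100$ reports the opposite --- and since the offset between the two alignments can in principle be arbitrarily large, dictionary order is not a regular relation in general. Reconciling these two alignment conventions is the crux of the proof.

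The key that unlocks the problem is a bounded-length lemma: if $F_k\le n<F_{k+1}$, then every leading-zero-free representation of $n$ using digits $0,1$ has top Fibonacci index in $\{k-1,k\}$, hence length in $\{k-2,k-1\}$. The upper bound holds because any single term $F_j$ with $j>k$ already exceeds $n$; the lower bound holds because $\sum_{2\le i\le k-2}F_i=F_k-2<n$, so no representation confined to $F_2,\dots,F_{k-2}$ can reach $n$. Consequently any two leading-zero-free representations of the same integer have lengths differing by at most $1$, so the offset between their most-significant ends is at most one position.

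With this in hand I would finish as follows. First observe that the dictionary maximum is itself leading-zero-free, and that any representation carrying leading zeros is dictionary-smaller than its leading-zero-free truncation (which has the same value); so in the defining formula the universal quantifier may be restricted to leading-zero-free $y$. Next build a single finite automaton $M$ that reads the left-padded pair $(y,x)$, checks that both are leading-zero-free, and decides $y<_D x$ under the standing assumption that their lengths differ by at most $1$ --- this needs only bounded memory, since a one-symbol buffer suffices to absorb the one-position offset between the aligned most-significant digits. Writing $\mathrm{lzf}$ for the (regular) leading-zero-free test and using the regular predicates above, the set $D$ is defined by
\[
 x\in D \iff \mathrm{lzf}(x)\ \wedge\ \forall y\,\Bigl(\mathrm{lzf}(y)\ \wedge\ [y]_F=[x]_F\ \wedge\ \neg\,\equal(x,y)\ \Rightarrow\ M(y,x)\Bigr).
\]
Every constituent is regular, and the formula uses only conjunction, negation, implication and a single universal quantifier, so by Theorem~\ref{one} the language $D$ is regular. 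The delicate point throughout is the handling of leading zeros and the alignment mismatch; once the bounded-length lemma reduces the relevant offset to at most one, the construction of $M$ --- and hence the whole argument --- becomes routine. (Completeness and unambiguity require no work here, as already noted: every $n$ has a representation, and we select exactly one.)
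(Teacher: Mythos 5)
Your proposal is correct and follows essentially the same route as the paper: the identical bounded-length lemma (proved via the same identity, $\sum_{2\le j\le i-2}F_j=F_i-2<F_i$) reduces the alignment offset to at most one position, your automaton $M$ is the paper's comparator DFA $C_D$, and your defining formula is precisely the paper's \texttt{dictOrder} definition handed to Theorem~\ref{one}. No gaps to report.
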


\begin{proof}
The idea is to construct a comparator DFA $C_D$ that
can take two representations in parallel and decide if one is greater than the other, in dictionary order.   

In order to take two representations in parallel, they would have to be the same length, and therefore the shorter one would have to be padded with leading zeros to make it the same length as the longer one.  In this case, it is not hard to see that no automaton can do the needed comparison.

However, in our case, we can take advantage of the following fact:  two Fibonacci representations for the same number cannot be of wildly different lengths.
\begin{lemma}
\label{lem: diffby1}
    The lengths of two Fibonacci-based representation strings for the same natural number differ by one at most (not counting leading zeros).
\end{lemma}
\begin{proof}
    Let $s$ and $t$ be two Fibonacci representations for a natural number $m$.
    Without loss of generality, assume that $s$ is longer.
    Suppose the leading $1$ digit of $s$ corresponds to $F_i$.
    If $s$ and $t$ differ in length by more than one,
    then $t$ is a sum of some $F_j$'s where $j \leq i-2$.
    Now a classic identity on Fibonacci numbers states that $\sum_{0 \leq j \leq n} F_j = F_{n+2}-1$.
    Using this relation, we conclude that 
    $\sum_{j=2}^{i-2} F_j = F_i - 2 < F_i$.
    Therefore $s$ and $t$ do not represent the same number.
\end{proof}

Using this fact, it is indeed possible to compare two strings in dictionary order with an automaton.
\begin{figure}[h]
\centering
    \includegraphics[width=4in]{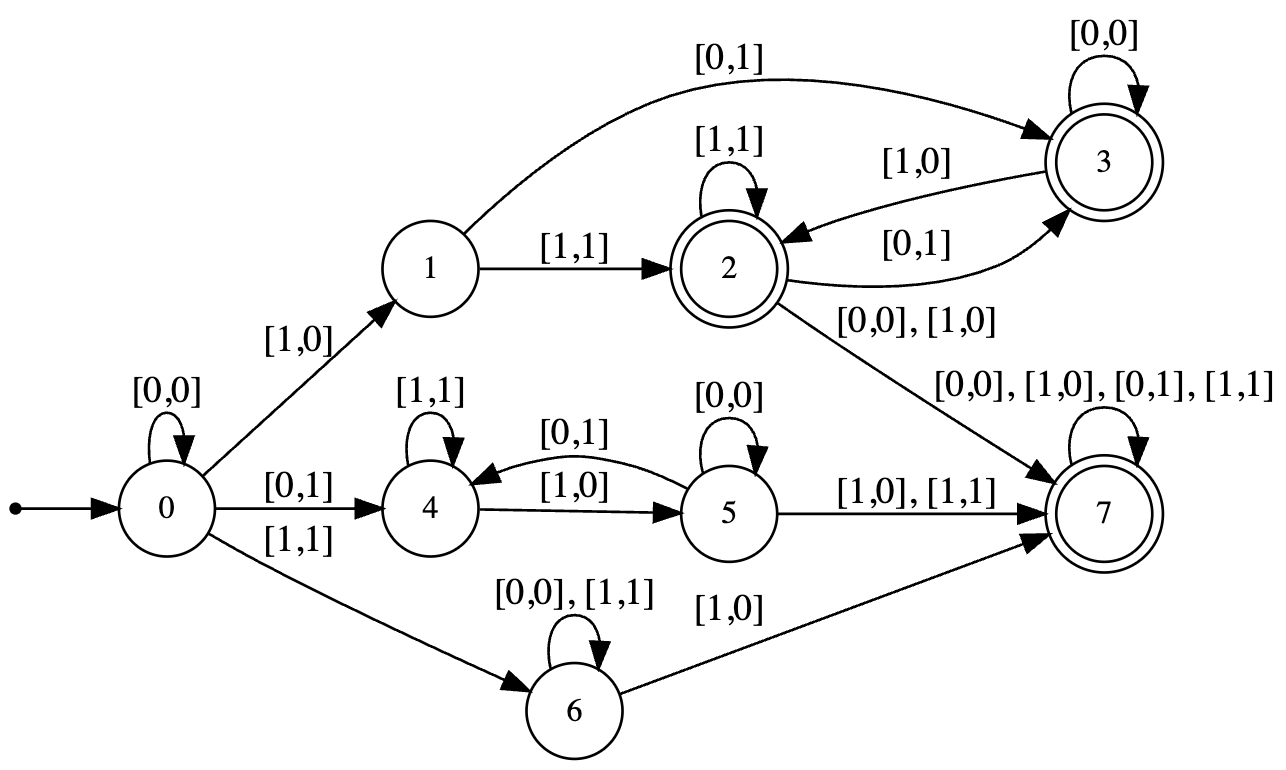}
    \caption{DFA $C_D$ for comparing strings in dictionary order.}
    \label{fig: dict1stGreater transition diagram}
\end{figure}
It is shown in Fig.~\ref{fig: dict1stGreater transition diagram} and
takes two inputs in parallel, $s'$ and $t'$.
Let $s$ and $t$ be $s'$ and $t'$ without leading zeros. 
The DFA $C_D$ accepts if and only if $s$ is greater than $t$ in dictionary order.
We have three cases to consider: $|s| > |t|$, $|s| < |t|$, and $|s| = |t|$.  
We now discuss how the $8$ states of $C_D$ relate to these $3$ cases.
\begin{itemize}
\itemsep-0.2em
    \item 
    State $0$ is the initial state. 
     
    \item
    State $1$ is reached if $|s| > |t|$; that is, if $s'$ starts with $1$ and $t'$ starts with $01$.
    \item
    State $2$ is reached when
    $|s| > |t|$, $s$ ends in $1$, and based on the inputs so far, 
    $t$ is a proper prefix of $s$ therefore $s > t$.
    \item
    State $3$ is reached when
    $|s| > |t|$, $s$ ends in $0$, and based on the inputs so far, 
    $t$ is a proper prefix of $s$ therefore $s > t$.
    \item 
    State $4$ is reached when
    $|s| < |t|$ and $t$ ends in $1$, and based on the inputs so far, 
    $s$ is a proper prefix of $t$ therefore $s < t$.
    \item 
    State $5$ is reached when
    $|s| < |t|$ and $t$ ends in $0$, and based on the inputs so far, 
    $s$ is a proper prefix of $t$ therefore $s < t$.
    \item 
    State $6$ is reached when
    $|s| = |t|$ and, based on the inputs so far, 
    we have $s = t$.
    \item 
    State $7$ is one of the accepting states.
    It is reached when we can identify a position $i$ such that $s_i > t_i$ .
    Additional symbols read, starting from this
    state, cannot
    change the comparison result.
\end{itemize}
It is now easy to verify that the transitions
maintain the invariants corresponding to each
state, and we leave this to the reader.
\end{proof}

Using the comparator automaton, we can build a DFA $D$ that finds the maximum dictionary order representation for each natural number.
The automaton $D$ takes two inputs in parallel: a number $n$ in Zeckendorf representation and a string $s \in \{0,1\}^*$;
and it only accepts if, out of all Fibonacci-based representations of $n$, the string $s$ is the greatest based on dictionary order.
We implement $D$ in \texttt{Walnut} as follows.
\begin{verbatim}
def dictOrder "$fcanon(s,n) & (At $fcanon(t,n) => ($dGreater(s,t)|$equal(s,t)))":
\end{verbatim}
Here \texttt{dictOrder} implements the automaton $D$;
\texttt{fcanon}, the automaton $C$; and
\texttt{dGreater}, the automaton $C_D$.
The resulting automaton has $7$ states and is depicted in Figure~\ref{dictaut}.
\begin{figure}[htb]
\centering
    \includegraphics[width=4in]{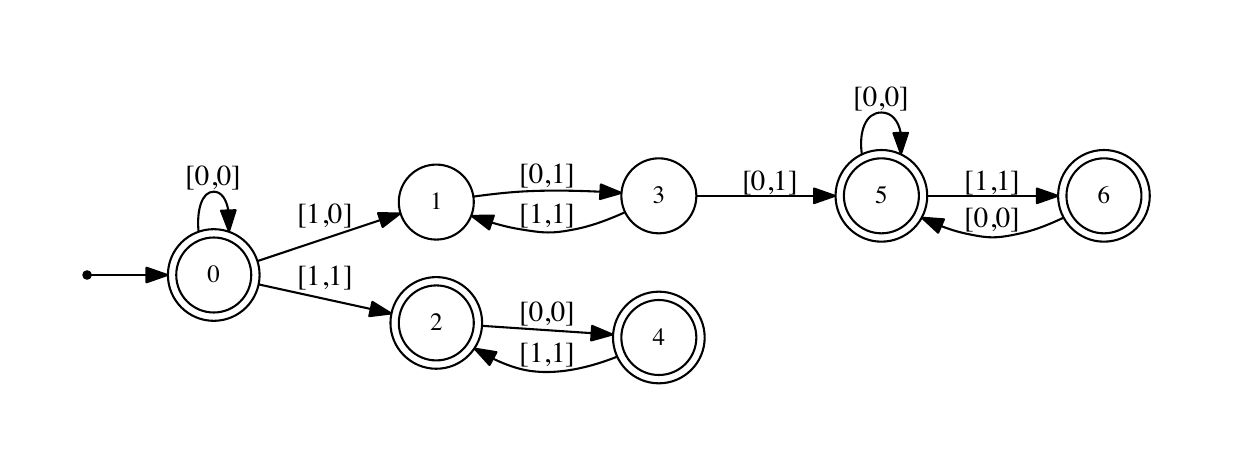}
    \caption{DFA $D$ for converting to dictionary order representation.}
    \label{dictaut}
\end{figure}

\section{Finding new perfect systems of small complexity via exhaustive search}
\label{exper}

We see that a Fibonacci-based representation of natural numbers 
can be represented by a language over the binary alphabet $\{0,1\}$. 
If the language is regular, 
we can express it with a DFA and test its completeness and unambiguity in \texttt{Walnut}.
For example, the Zeckendorf representation can be expressed as a $3$-state DFA and 
the Brown one, a $4$-state DFA. 
Therefore we were curious about whether there exist other DFAs with a small number of states
that can qualify as complete and unambiguous representations. 
We conducted an exhaustive search to find such automata
and found a surprising number of them. 
If we allow up to $7$ states, we found more than $28$ new complete and unambiguous representations.\footnote{There could be more as the heuristics we used to trim our search tree can sometimes exclude eligible representations if, for two numbers $m,n$ where $m<n$, the representation of $m$ is longer than that of $n$.}
We present two interesting examples out of the seven new $6$-state representations. 
\begin{theorem}
    Let $L = 0^*( \epsilon | 1 | 10 (\epsilon|0|1) 1^* (01^+)^* (\epsilon|0) )$. 
    Then $L$ is complete and unambiguous.
\end{theorem}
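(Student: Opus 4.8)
Since $L$ is a language over the binary alphabet $\{0,1\}$, the statement lies entirely within the framework of Section~\ref{rep01}. There, completeness and unambiguity are captured by the two first-order sentences \eqref{log1} and \eqref{log2}, and by Theorem~\ref{one} each becomes mechanically decidable as soon as $L$ is presented as a finite automaton. The plan, then, is not to argue by induction on $n$ (which for a language this intricate would require an unpleasant case analysis), but to run the decision procedure concretely in {\tt Walnut}, exactly as was done for Brown's lazy system.

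The first step is to translate the regular expression $0^*(\,\epsilon \mid 1 \mid 10\,(\epsilon\mid 0\mid 1)\,1^*\,(01^+)^*\,(\epsilon\mid 0)\,)$ into {\tt Walnut}'s regular-expression syntax and build the corresponding DFA $M_L$. The only delicacy here is the handling of the empty-word alternatives: each factor $(\epsilon\mid 0)$ is rendered as an optional symbol, while the outer $\epsilon$ alternative, together with the leading $0^*$, is what supplies the representation of $0$. Next I would reuse the converter automaton $C$ (the {\tt fcanon} predicate) of Section~\ref{rep01}, which relates an arbitrary Fibonacci representation to its Zeckendorf form and so lets me refer to the value $[x]_F$ inside a formula. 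With $M_L$ and $C$ available, completeness is the sentence $\forall n\,\exists s\ ({\tt fcanon}(s,n)\wedge M_L(s))$ and unambiguity is $\neg\exists n,s,t\ (M_L(s)\wedge M_L(t)\wedge \neg\,\equal(s,t)\wedge {\tt fcanon}(s,n)\wedge {\tt fcanon}(t,n))$, these being the instantiations of \eqref{log1} and \eqref{log2} for this particular $L$. I expect both to evaluate to {\tt TRUE}, which establishes the theorem.

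The step demanding the most care is not the logic, which is purely mechanical, but the faithful encoding of $L$. Because $L$ was discovered by an exhaustive search over small automata, I would independently confirm that the DFA {\tt Walnut} builds from my regular expression minimizes to the expected six states and accepts the correct short strings, so as to exclude any transcription error in the nested alternations and stars; a hand check of the representations assigned to the first several values of $n$ serves as a further sanity test. Everything else is guaranteed: $M_L$ and $C$ are both fixed finite automata (the latter being finite because two Fibonacci representations of a single value cannot differ much in length), so the two sentences above lie in the decidable fragment of Theorem~\ref{one}. Consequently no induction and no case analysis are needed, and the two {\tt Walnut} evaluations together complete the proof.
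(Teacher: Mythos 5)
Your proposal matches the paper's proof essentially verbatim: the paper defines the language with the {\tt Walnut} command {\tt reg one0sq \{0,1\} "0*(()|1|10(()|0|1)1*(01+)*(()|0))"} and then runs exactly the two {\tt fcanon}-based sentences you wrote (instantiating \eqref{log1} and \eqref{log2}), both of which return {\tt TRUE}. The approach is identical, so nothing further is needed beyond actually executing the computation.
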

\begin{proof}
We use the following {\tt Walnut} code:
\begin{verbatim}
reg one0sq {0,1} "0*(()|1|10(()|0|1)1*(01+)*(()|0))":
eval one0sqTestC "?msd_fib An Ex $one0sq(x) & $fcanon(x,n)":
eval one0sqTestU "?msd_fib ~En,x,y $one0sq(x) & $one0sq(y)
    & (~$equal(x,y)) & $fcanon(x,n) & $fcanon(y,n)":
\end{verbatim}
\noindent Both returned \texttt{TRUE}. 
Here \texttt{one0sq} tests membership in $L$.
\end{proof}
Notice this representation allows $100$ at the very beginning but no other consecutive $0$'s are allowed. 
This restriction on $00$ blocks is very similar to Brown's. 
In fact, Brown's can be expressed, in the form of a regular expression, as\\
\centerline{$0^*( \epsilon | 1 1^* (01^+)^* (\epsilon|0) ) = 0^*( \epsilon | 1 $\st{$| 10 (\epsilon|0|1)$}$ 1^* (01^+)^* (\epsilon|0) )$.}

We can imagine that a new representation could be generated for allowing a block of $00$ after the second $1$, or the third, or after both the first and third $1$, or the first and fourth, etc.
This offers another construction of infinitely many perfect representations.

\begin{theorem}
    Let $L$ be the language accepted by the DFA $Z$. Then $L$ is complete and unambiguous. 
\end{theorem}
\begin{figure}[h]
\centering
    \includegraphics[width=3.5in]{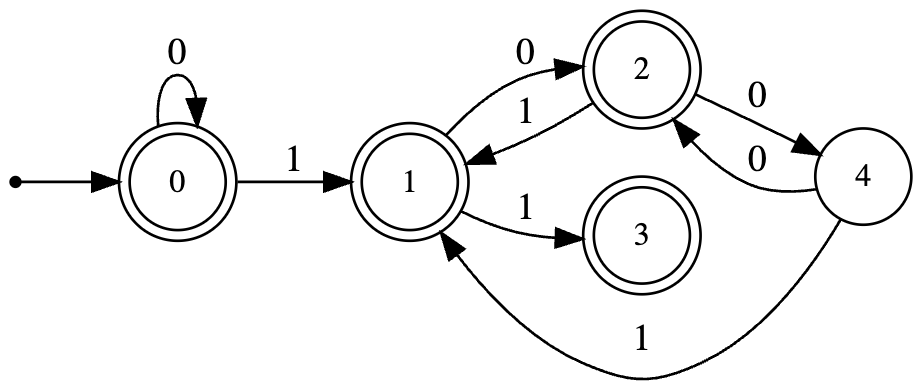}
    \caption{The DFA $Z$.}
    \label{fig: newDFA from search}
\end{figure}
\begin{proof}
We use the following {\tt Walnut} code:
\begin{verbatim}
eval azTestC "?msd_fib An Ex $az(x) & $fcanon(x,n)":
eval azTestU "?msd_fib ~En,x,y $az(x) & $az(y) & (~$equal(x,y))
    & $fcanon(x,n) & $fcanon(y,n)":
\end{verbatim}
\noindent Both returned \texttt{TRUE}. 
Here \texttt{az} tests membership in $L$.
\end{proof}
The strings in $L$ can end with a single $1$ or the block $11$ or an odd number of $0$'s,
but not an even number of $0$'s.
Additionally, the strings cannot contain the block ``$11$'' anywhere but the end.
This restriction on ``$11$'' is reminiscent of the Zeckendorf representation.

\section{Final remarks}
The ideas in this paper can be extended in many different ways.  For example, we could consider representations in terms of Fibonacci numbers of both positive and negative index with various constraints \cite{Park&Cho&Cho&Cho&Park:2020}, or
representations
in terms of sums of the Lucas
numbers \cite{Brown:1969},   or other linear recurrences,
such as the Pell numbers \cite{Horadam:1993}  or Tribonacci numbers \cite{Carlitz&Scoville&Hoggatt:1972}.   The automaton-based
approach can be used in all of these cases.

\newcommand{\noopsort}[1]{} \newcommand{\singleletter}[1]{#1}

\end{document}